\newtheorem{theorem}{Theorem}
\newtheorem{remark}{Remark}
\newcommand{\sfo}{{\mathsf{O}}}
\newcommand{\sfa}{{\mathsf{A}}}
\newcommand{\sfb}{{\mathsf{B}}}
\newcommand{\ohat}{\hat{\omega}}
\begin{document}

\title{Steering, incompatibility, and  Bell inequality violations in a class of probabilistic theories}


\author{Neil Stevens}
\email{ns695@york.ac.uk}
\affiliation{University of York, York YO10 5DD, UK} 
\author{Paul Busch}
\email{paul.busch@york.ac.uk}
\affiliation{University of York, York YO10 5DD, UK}

\date{\today}

\begin{abstract}
We show that connections between a degree of incompatibility of pairs of observables and 
the strength of violations of Bell's inequality found in recent investigations can be extended to a general
class of probabilistic physical models. It turns out that the property of universal uniform steering is sufficient
for the saturation of a generalised Tsirelson bound, corresponding to maximal violations of Bell's inequality. 
It is also found that a limited form of steering is still available and sufficient for such saturation in some state spaces
where universal uniform steering is not given.
The techniques developed here are applied to the class
of regular polygon state spaces, giving a strengthening of known results. However, we also find indications that
the link between incompatibility and Bell violation may be more complex than originally envisaged.

\end{abstract}

\maketitle


\section{Introduction}

The Bell inequalities \cite{Bell64} provide constraints that 
certain families of joint probability distributions must satisfy to admit a common joint distribution. It is known that
the satisfaction of a full set of Bell inequalities in a probabilistic system is equivalent to the existence 
of such a joint probability\cite{Fine82a,Fine82b}.\footnote{As observed by Pitowsky \cite{Pitowsky94}, Bell-type inequalities 
had already been formulated  as early as 1854 by George Boole, who deduced them as conditions for the 
possibility of objective experience \cite{Boole1854,Boole1862}.} It was observed subsequently that joint measurability 
(in the sense that there exist joint probabilities of the usual quantum mechanical form for every state) entails an 
operator form of Bell inequalities; therefore, the Bell inequalities are satisfied whenever the observables involved 
in an EPR-Bell type experiment are mutually commutative \cite{deMuynck84}. In the case of unsharp observables, 
commutativity is not required for joint measurability and the degree of unsharpness of the observables required for 
joint measurability can be determined; this value is more restrictive than is needed for violations of the Bell inequalities 
to be eliminated in the case of the singlet state \cite{Busch85c,Busch86,Kar95,Andersson05}. 

The connection between joint measurability and Bell inequalities -- in the specific form of the CHSH inequalities \cite{CHSH}, which
apply to experiments involving runs of measurements of two pairs of dichotomic observables on a bipartite system -- 
has been further elucidated in two interesting recent publications 
by Wolf {\em et al} \cite{Wolf09} and Banik {\em et al} \cite{Banik12}. The former have shown that for any pair of incompatible 
dichotomic observables in a finite dimensional quantum system a violation of a CHSH inequality will be obtained. 
Hence, incompatibility is not only necessary but also sufficient for obtaining Bell inequality violations. 
Wolf {\em et al} \cite{Wolf09} conclude that {``if a hypothetical no-signaling theory is a refinement of quantum mechanics 
(but otherwise consistent with it), it cannot render possible the joint measurability of observables which are incompatible 
within quantum mechanics".} With this result a tight link has been established between the availability of incompatible 
observables and the possibility of violating a CHSH inequality. It is natural to ask whether a quantitative connection 
can be found between  a degree of incompatibility and the strength of these violations, and whether such a connection
is specific to quantum mechanics or holds in a wider class of probabilistic physical theories.

It is a well known fact that two incompatible quantum observables can be {\em approximately} measured 
together if some unsharpness in the measurement is allowed. A measure of the incompatibility of two observables 
can then be obtained by quantifying the degree of unsharpness required to obtain an approximate joint measurement.
In the case of dichotomic observables this can be achieved by mixing each observable with a trivial observable
(a POVM whose positive operators are multiples of the identity)\footnote{Such mixing procedures and their connection 
with goal of achieving joint measurability are investigated systematically in \cite{BuHe08}).}, with relative weights $\lambda$, $1-\lambda$. 
The mixing weight determines the degree of unsharpness of the resulting smeared observable.

Banik {\em et al} have shown that the degree of incompatibility (they use the term complementarity) of two dichotomic 
observables, quantified by the largest smearing parameter, $\lambda$, for which the smeared versions are compatible, puts limitations 
on the maximum strength of  CHSH inequality violations available in such a theory \cite{Banik12}. The Bell functional, $\mathbb B$, a
generalisation of what is known as the Bell operator in the quantum case, then is
bounded by the parameter $\lambda_{\rm opt}$ associated with the ``most incompatible'' pair of observables, so that $\mathbb B\le2/\lambda_{\rm opt}$.

Here we  study the connection between degrees of incompatibility and CHSH inequality violation in the context of
general probabilistic physical theories by way of unifying the approaches of \cite{Wolf09} and \cite{Banik12}. 
We will see that the degree of incompatibility used by Banik {\em et al} is closely linked 
with an unnamed parameter used in \cite{Wolf09} to characterise the joint measurability of two dichotomic observables.
Under an additional assumption on the physical theory, namely that it supports a sufficient degree of steering, 
the construction used to violate the CHSH inequality generalises. This gives a sufficient condition under which the 
maximal violation can be saturated. This result can be rephrased by saying that probabilistic theories 
can be classified according to the value of the {\em generalised Tsirelson bound}, defined as the maximum  
value of the Bell functional, and this bound  can (under said assumptions) be realised by suitable maximally incompatible
observables (see Theorem 1). 

Finally we illustrate the link between incompatibility and Bell violation in the class of regular polygon state spaces. It turns out
that this connection appears to hold generally in the case of even-sided polygons but not, at least in the same form, for 
odd-sided cases.

\section{General Probabilistic Models}

We begin by presenting the basic elements of the standard framework of probabilistic models. 
The framework was introduced in the 1960s by researchers in quantum foundations who used 
it to investigate axiomatic derivations of the Hilbert space formalism of quantum mechanics from 
operational postulates. Due to the emphasis on the convex structure of the set of states and the use of operations to 
model state transformations, the approach was called {\em convex state approach} or {\em operational approach}.
Some pioneering references are \cite{Ludwig67,Ludwig68,Mielnik69,Davies70,QTOS}. An overview 
of the literature and of relevant monographs can be obtained from \cite{SMQM} and \cite{OQP}. Recently the approach 
has gained renewed interest from researchers in quantum information exploring the information theoretic foundations 
of quantum mechanics. Accessible recent introductions can be found in e.g. \cite{Barnum06,Barnum09b,Pfister12}. 

The set of states $\Omega$ of a general probabilistic model  is taken to be a compact convex subset of a finite 
dimensional vector space $V$, where the convexity corresponds to the ability to define a preparation procedure as 
a probabilistic mixture of preparation procedures corresponding to other states. We write $A(\Omega)$ for the ordered 
linear space of affine functionals on $\Omega$, with the ordering given pointwise: 
$f \ge 0$ if $f(\omega) \ge 0$ for all  $\omega \in \Omega$. $A(\Omega)$ is also canonically an order unit space, 
with order unit $u$ defined by $u(\omega) = 1$ for all states $\omega \in \Omega$.
The (convex) set of effects on $\Omega$ is then taken to be the unit interval $[0,u]$ inside $A(\Omega)$, i.e.
\begin{equation}
\mathscr{E}(\Omega) = \{e \in A(\Omega) | 0 \le e(\omega) \le 1, \, \forall \, \omega \in \Omega \}.
\end{equation}
A discrete observable $\sfo$ is then a function from an outcome set $X$ into $\mathscr{E}(\Omega)$, 
that satisfies the normalisation condition $\sum_{x \in X} \sfo[x] = u$. The value (lying between $0$ and $1$) 
of $\sfo[x](\omega)$ denotes the probability of getting outcome $x$ for a measurement of the observable 
$\sfo$ in state $\omega$.

Under the assumption of tomographic locality \cite{Hardy11}, the state space of a composite system with local state spaces $\Omega_1$ and $\Omega_2$ 
naturally lives in the vector space $V_1 \otimes V_2$.
We then write $\Omega = \Omega_1 \otimes \Omega_2 = (V_1 \otimes V_2)_+^1$, where the normalisation is given 
by the order unit $u_1 \otimes u_2 \in V_1^* \otimes V_2^*$, but in general the positive cone is not unique \cite{NamPhelps1969}.

Although there is much choice in general for the ordering on $V_1 \otimes V_2$, there are two canonical choices, 
the \emph{maximal} and \emph{minimal}. As a minimal demand it is reasonable to expect $v_1 \otimes v_2 \ge 0$ 
whenever $v_1, v_2 \ge 0$, therefore we make the definition
\begin{equation}(V_1 \otimes_{min} V_2)_+ 
= \left\{\sum_{i,j} \lambda_{ij} v_1^{(i)}\otimes v_2^{(j)} \Big| \lambda_{ij} \in \mathbb{R}_+, v_k^{(i)} \in (V_k)_+\right\}.
\end{equation}
We can similarly make such demands on the order structure on $V_1^* \otimes V_2^*$ leading to the converse definition
\begin{equation}
(V_1 \otimes_{max} V_2)_+ = (V_1^* \otimes_{min} V_2^*)_+^*.
\end{equation}
Any cone on $V_1 \otimes V_2$ which lies between the maximal and minimal cones is then admissible as a viable 
order structure. In general the tensor product chosen is an important part in defining a theory; the only time when 
there is no choice (since maximal and minimal are the same) is when the local state spaces are simplexes
 \cite{NamPhelps1969}.
The case where both $\Omega_1$ and $\Omega_2$ are quantum state spaces provides a prime example of a nonminimal, 
nonmaximal order structure, namely the standard quantum mechanical tensor product. By definition 
$\Omega_1 \otimes_{min} \Omega_2$ contains only separable states, which form a proper subset of all bipartite states; 
by contrast, $\Omega_1 \otimes_{max} \Omega_2$ contains not only the usual quantum states, but also all 
normalised entanglement witnesses.

A bipartite state $\omega \in \Omega_1\otimes\Omega_2$ can also be viewed as a way to prepare states in $\Omega_1$, 
via the measurement of an observable on $\Omega_2$. In this way, for each state $\omega$, we can define the corresponding 
linear map $\ohat: V_2^* \rightarrow V_1$ by
\[
a(\ohat(b)) = \omega(a,b), \quad  a \in V_1^*, \ b \in V_2^*.
\]

\section{Fuzziness and joint measurability}

Consider a system represented by a probabilistic model, whose state space is given by the convex set $\Omega$. Any 
dichotomic (or two-outcome) observable $\sfo$ on $\Omega$ is determined by an effect 
$e=:\sfo[+1] \in \mathscr{E}(\Omega)$, where for any 
$\omega \in \Omega$, the probability of getting the outcome labelled by `+1' in the state $\omega$ is given 
by $e(\omega)$, and similarly for the outcome `-1' associated with the complement effect $e':=u-e=\sfo[-1]$.

Two effects $e$ and $f$ are said to be jointly measurable if there exists $g \in A(\Omega)$ satisfying
\begin{equation}
\begin{array}{c}0 \le g,\\
g \le e,\\
g \le f,\\
e+f \le g+u,\end{array}\label{JM}
\end{equation}
where $u$ is the order unit on $\Omega$. This condition is equivalent to the existence of a joint observable for the 
dichotomic observables corresponding to $e$ and $f$.

Given a two-outcome observable $\sfa$ determined by effect $e$, one can introduce a corresponding 
fuzzy observable $\sfa^{(\lambda)}$ as a smearing (or fuzzy version) of $\sfa$, whose defining effect is given by
\begin{equation}
e^{(\lambda)}  =  \frac{1+\lambda}{2}e + \frac{1-\lambda}{2}e' 
 =  \lambda e + \frac{1-\lambda}{2}u, \label{fuzz}
\end{equation}
with smearing parameter $\lambda \in [0,1]$, and complement effect $e^{(\lambda)\prime} = e^{\prime(\lambda)}$.

Given any pair of two-outcome observables $\sfa_1, \sfa_2$, with corresponding effects $e, f$, we can use the parameter 
$\lambda$ to give a measure of how incompatible they are. First we note that for 
$\lambda=\frac12$, the choice of effect $g=\frac14(e+f)$ generates a joint observable for $e$ and $f$ since it satisfies  
\eqref{JM}, as is readily verified.
Thus the set of values of $\lambda$ which make $e^{(\lambda)}$ and $f^{(\lambda)}$ jointly measurable contains $\frac 12$. 
Further,  if $e^{(\lambda)}$ and $f^{(\lambda)}$ are jointly measurable, then for any $\lambda' \le \lambda$ 
so are $e^{(\lambda')}$ and $f^{(\lambda')}$. Hence the set lies inside the interval $[0,\lambda_{e,f}]$, 
where we define $\lambda_{e,f}$ to be the solution to the cone-linear program
\begin{align}\label{Fuzz}
\textrm{maximise:\quad\quad} & \lambda \nonumber\\
\textrm{subject to:\quad}  g &\le e^{(\lambda)} \nonumber\\
 g &\le f^{(\lambda)} \\
 0 &\le g \nonumber \\
 e^{(\lambda)} + f^{(\lambda)} - u &\le g.\nonumber 
\end{align}

This measure of incompatibility of a pair of effects in turn leads to a measure of the degree of incompatibility 
of a given model by looking for the most incompatible pair:
\begin{equation}\label{lambdaopt}
\lambda_{\rm opt} = \inf_{e,f \in \mathscr{E}(\Omega)} \lambda_{e,f}.
\end{equation}

Following a path similar to \cite{Wolf09}, we can define a different parameter $t_{e.f}$, which we will see 
is closely linked with $\lambda_{e,f}$. For a given pair of effects $e$ and $f$, we define $t_{e,f}$ to be the 
solution to the cone-linear program:
\begin{align}\label{Wolf}
\textrm{minimise:\quad\quad} & t \nonumber\\
\textrm{subject to:\quad}  g &\le e + tu\nonumber \\
 g &\le f + tu \\
 0 &\le g \nonumber\\
 e + f - u &\le g. \nonumber
\end{align}
As shown in \cite{Beneduci13}, the optimal set for \eqref{Wolf} is nonempty, so the minimum can be achieved, hence $e$ and $f$ are incompatible if and only if $t_{e,f} > 0$. Here we notice 
that the pair $(\lambda,g)$ being feasible for the problem \eqref{Fuzz} is equivalent to the pair 
$\left(\tfrac{1-\lambda}{2\lambda},\tfrac{g}{\lambda}\right)$ being feasible for the problem \eqref{Wolf}. 
Combining this with the fact that the function $\tfrac{1-\lambda}{2\lambda}$ is monotonically decreasing 
for $\lambda \in [0,1]$ brings us to the promised link
\begin{equation}t_{e,f} = \frac{1-\lambda_{e,f}}{2\lambda_{e,f}}.\end{equation}

\subsection*{Examples}

In a model of discrete classical probability theory we take the state space to be the set of all probability measures 
on some countable set $X$, i.e.
\begin{equation} 
\Omega = \biggl\{ (\omega_x)_{x \in X}\, \big |\, \omega_x \ge 0  \ \forall x \in X,\ \sum_x \omega_x = 1  \biggr \}. 
\end{equation}
A functional $e$ on $\Omega$ with action $e(\omega) = \sum_x e_x\omega_x$ is easily seen to be positive 
iff $e_x \ge 0$ for all $x \in X$, and the order unit satisfies $u_x = 1$ for all $x \in X$.

Suppose we now have two effects $e,f  \in \mathscr{E}(\Omega)$. Taking $g$ to have components 
$g_x = \min\{e_x,f_x\}$, then since positivity is determined componentwise the inequalities \eqref{JM} are 
immediately satisfied, and hence $e$ and $f$ are jointly measurable. Since this holds for arbitrary $e$ 
and $f$ in this case we have $\lambda_{opt} = 1$.


As shown in \cite{Banik12}, in any finite dimensional Hilbert space the value of the joint measurability 
parameter for a pair
of dichotomic observables is $\lambda_{\rm opt}=1/\sqrt2$.

A simple non-classical, non-quantum example is that of the {\em squit}. The two dimensional state space is given 
by a square, denoted $\square$; it contains all points $(x,y,1)$ with $-1 \le x+ y \le 1$, $-1 \le x-y \le 1$, and 
takes the shape of a square. As we will see, the squit leads to maximally incompatible effects in the sense 
that it leads to the 
smallest possible value of $\lambda_{opt}$.

Firstly we note that for any probabilistic model $\lambda = \tfrac{1}{2}$ provides a lower bound for $\lambda_{opt}$, 
since $e^{(\frac{1}{2})} = \tfrac{1}{2}e + \tfrac{1}{4}u$ and $f^{(\frac{1}{2})} = \tfrac{1}{2}f + \tfrac{1}{4}u$ 
are always jointly measurable. This can be seen explicitly by setting $g= \tfrac{1}{4}e+\tfrac{1}{4}f$, 
then the corresponding equations \eqref{JM} are satisfied.

As a convenient parametrisation we can write a generic affine functional $g\in A(\square)$ as a vector 
$g=(a,b,c)$, with action given by the canonical inner product scaled by a factor of $\frac{1}{2}$. In this 
case the order unit is given by $u=(0,0,2)$. Since the positivity of a functional $g$ on a compact convex 
set is equivalent to positivity 
on its extreme points, we can determine the structure of the set of effects by demanding that its elements $g$ take values 
between $0$ and $1$ on the extreme  points of the set of states. 
In the case of the squit, $\mathscr{E}(\square)$ is a convex polytope 
with defining inequalities given by
\begin{equation}
u\ge g\ge 0 \iff \left\{ \begin{array}{ll}    2\ge c+a\ge0, &\ 2\ge c+b\ge0,\\ 2\ge c-a\ge0,&\ 2\ge c-b\ge0.\end{array}\right.
\end{equation}
We note the extreme points: $(0,0,2)=u$, $(0,0,0)$, $(1,1,1)$, $(1,-1,1)$, $(-1,1,1)$, $(-1,-1,1)$.

In an attempt to find the lowest possible value of $\lambda_{e,f}$ we consider the case of the two orthogonal 
extremal effects $e=(1,1,1)$ and $f=(1,-1,1)$. 
In order for $e^{(\lambda)}$ and $f^{(\lambda)}$ to be jointly measurable we need to be able to find a 
$g$ that satisfies all the inequalities in \eqref{JM}. This entails, in particular:
\begin{align*}
g-e^{(\lambda)}-f^{(\lambda)}+u &= (a-2\lambda,b,c) \ge 0,\\
&\qquad\text{giving}\quad 2\lambda \le a+c ;\\
e^{(\lambda)}-g &= (\lambda-a,\lambda-b,1-c) \ge 0,\\
&\qquad\text{giving}\quad \lambda \le 1+a-c;\\
f^{(\lambda)}-g &= (\lambda-a,-\lambda-b,1-c) \ge 0,\qquad\\
&\qquad\text{giving}\quad \lambda \le 1-a-c;\\
g &= (a,b,c) \ge 0,\\
&\qquad\text{giving}\quad a \le c.
\end{align*}
Combining these inequalities leads to $4\lambda \le 2+a-c \le 2$, so for this choice 
of $e$ and $f$ we must have $\lambda_{e,f} \le \tfrac{1}{2}$. Given that 
$\frac{1}{2}$ is the lowest possible value, we conclude that in the case of the squit $\lambda_{opt} = \tfrac{1}{2}$.

\section{Steering and saturation of the generalised Tsirelson bound}

In order to give conditions on a generalised probabilistic model under which the bound  on CHSH violations
given in \cite{Banik12} can be achieved we need to introduce the notion of steering, as given in \cite{Barnum09a}.

Given two systems $A$ and $B$, with state spaces $\Omega_A$ and $\Omega_B$ respectively, for any bipartite 
state $\omega \in \Omega_A\otimes\Omega_B$ we can define its $A$ \emph{marginal}, living in $\Omega_A$ in 
an analogue to the quantum mechanical partial trace:
\begin{equation}
\omega^A = \ohat(u_B),
\end{equation}
where $u_B$ is the order unit on $B$, with a similar definition for $\omega^B$.

Following this we say that a state $\omega \in \Omega_A\otimes\Omega_B$ is \emph{steering} for its A marginal 
if for any collection of sub-normalised states that form a decomposition of that marginal, i.e., 
$\{\alpha_1,...,\alpha_n | \sum_i\alpha_i = \omega^A, 0 \le u_A(\alpha_i) \le 1 \}$, there exists an observable
$\{e_1,...,e_n\} \subset \mathscr{E}(\Omega_B)$ with $\alpha_i = \ohat(e_i)$.

It was observed by Schr\"odinger that this property holds in quantum mechanics for all pure bipartite states \cite{Schroedinger1936}, 
originally coining the term steering, which we generalise now, following \cite{Barnum09a}:
A general probabilistic model of a system $A$ with state space $\Omega_A$ supports \emph{uniform universal steering} 
if there is another system $B$ with state space $\Omega_B$, such that for any $\alpha \in \Omega_A$, there is a state 
$\omega_\alpha \in \Omega_A\otimes\Omega_B$, with $\omega_\alpha^A = \alpha$ that is steering for its $A$ 
marginal, and supports \emph{universal self-steering} if the above is satisfied with $B=A$.
The existence of steering in this manner is similar to the idea of purification to be found, for example,  in \cite{Chiribella10}. 
Indeed any purification of a state will be steering for its marginals; however steering states being pure is not required here.

The magnitude of maximal CHSH violations is quantified in quantum mechanics by the norm of the \emph{Bell operator}. 
We take $\sfa_1, \sfa_2, \sfb_1$ and $\sfb_2$ to be $\pm1$-valued 
observables, and define following \cite{Banik12} 
\begin{equation*}
\mathbb{B}:= \langle A_1B_1 + A_1B_2 + A_2B_1 - A_2B_2 \rangle_\omega  ,
\end{equation*}
where $A_1:=\sfa_1[+1]-\sfa_1[-1]$, etc., and $\langle X\rangle_\omega:=  X(\omega)$ for any affine functional $X$.
We will call the map $\omega\mapsto \mathbb{B}$ the  {\em Bell functional} and refer to 
$\sup_\omega \mathbb{B}$ as the (generalised) {\em Tsirelson bound}.

In order to see where steering enters the picture, we follow \cite{Banik12} to get a simple bound on the norm of $\mathbb{B}$. In order to do this we consider what effect smearing the observables of one party has by defining
\begin{equation}\mathbb{B}^{(\lambda)} = \langle A_1^{(\lambda)} B_1 + A_1^{(\lambda)} B_2 + A_2^{(\lambda)} B_1 - A_2^{(\lambda)} B_2 \rangle,\end{equation}
where $A_1^{(\lambda)}=\sfa_1^{(\lambda)}[+1]-\sfa_1^{(\lambda)}[-1]$etc., with the smearing of the effects as defined as in \eqref{fuzz}.
Due to the fact that the choice of observable that is mixed to form the smearing is an unbiased trivial observable, the resulting expectation scales with the smearing parameter:
\begin{equation}A_1^{(\lambda)}=\lambda\sfa_1[+1]+\frac{1-\lambda}{2}u-\lambda\sfa_1[-1]-\frac{1-\lambda}{2}u=\lambda A_1.\end{equation}
Now since the Bell functional is bilinear, and the same smearing parameter is being used on all functionals on the first system, the linear scaling carries over and we get $\mathbb{B}^{(\lambda)}=\lambda\mathbb{B}$.

As shown in the previous chapter, there always exists jointly measurable fuzzy versions of any pair of observables, so long as the value of the smearing parameter is small enough. Now if we take any $\lambda$ such that $A_1^{(\lambda)}$ and $A_2^{(\lambda)}$ are jointly measurable, then we know that the corresponding Bell functional satisfies the usual Bell inequality, and thus its value is bounded by $\mathbb{B}^{(\lambda)}\le 2$.
Consequently, each such value of $\lambda$ gives a bound on on the Bell functional of $\mathbb{B}\le\tfrac{2}{\lambda}$, and in order to obtain the lowest such upper bound we take the largest smearing parameter which still results in joint measurability, to get
\begin{equation}\mathbb{B}\le \frac{2}{\lambda_{\sfa_1[+1],\sfa_2[+1]}}.\end{equation}

Since every probabilistic model contains observables which are jointly measurable with no smearing, and thus satisfying the usual Bell inequality, knowing the above bound for a single pair of observables will not necessarily yield information about the structure of the system itself. A more general bound however can be written down by simply taking the most incompatible pair of observables:
\begin{equation} \mathbb{B}\le\frac{2}{\lambda_{\rm opt}}.\label{bound}\end{equation}

\begin{theorem}\label{thm1}
In any probabilistic model of a systen $A$ that supports uniform universal steering, the Tsirelson bound is given by 
the tight inequality that can be saturated:
\begin{equation}
\mathbb{B} \, \le\, \frac2{\lambda_{\rm opt}},
\end{equation}
with $\lambda_{\rm opt}$ defined in Eq.~\eqref{lambdaopt}.
\end{theorem}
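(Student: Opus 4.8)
The inequality itself is already Eq.~\eqref{bound}; what the theorem adds is that it is \emph{tight}, so the plan is to exhibit a bipartite state together with a choice of four observables attaining $\mathbb{B}=2/\lambda_{\rm opt}$. I would first record that $\mathscr{E}(\Omega_A)$ is compact and that the value $\lambda_{e,f}$ of the program \eqref{Fuzz} varies (lower semi-)continuously with the pair $(e,f)$, so the infimum in Eq.~\eqref{lambdaopt} is attained by some pair $e,f$, which I install as $\sfa_1[+1],\sfa_2[+1]$ on $A$; at this pair the smeared effects $e^{(\lambda_{\rm opt})},f^{(\lambda_{\rm opt})}$ sit exactly on the boundary of joint measurability. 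Since $\mathbb{B}^{(\lambda)}=\lambda\mathbb{B}$, reaching $\mathbb{B}=2/\lambda_{\rm opt}$ is the same as driving the smeared Bell functional to its classical ceiling $\mathbb{B}^{(\lambda_{\rm opt})}=2$.

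Next I would rewrite the whole Bell functional through the steering map. Using $\langle A_iB_j\rangle_\omega=A_i(\ohat(B_j))$ and the subnormalised steered states $\sigma_j^{\pm}:=\ohat(\sfb_j[\pm1])$, which satisfy $\sigma_j^{+}+\sigma_j^{-}=\omega^A$, one gets $\ohat(B_j)=\sigma_j^{+}-\sigma_j^{-}$ and hence
\begin{equation*}
\mathbb{B}=(A_1+A_2)(\sigma_1^{+}-\sigma_1^{-})+(A_1-A_2)(\sigma_2^{+}-\sigma_2^{-}).
\end{equation*}
Writing $\sigma_j^{+}-\sigma_j^{-}=2\sigma_j^{+}-\omega^A$ turns this into a linear functional of the marginal $\omega^A$ and of the two ``$+1$'' pieces $\sigma_1^{+},\sigma_2^{+}$, each constrained only by $0\le\sigma_j^{+}\le\omega^A$ (which also forces $u_A(\sigma_j^{+})\le1$). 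Thus all dependence on $B$ is repackaged as a pair of subnormalised decompositions of a single marginal.

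The hypothesis now does its work cleanly: uniform universal steering lets me pick any marginal $\alpha_0\in\Omega_A$, furnishes one bipartite state $\omega_{\alpha_0}$ with $\omega_{\alpha_0}^A=\alpha_0$, and guarantees that \emph{every} decomposition of $\alpha_0$ is reproduced by some measurement on $B$. Because the two decompositions needed for $\sfb_1$ and $\sfb_2$ are both decompositions of the same $\alpha_0$, they are realised simultaneously on this one state. Maximising $\mathbb{B}$ therefore collapses to the intrinsic problem on $\Omega_A$ of choosing $\alpha_0\in\Omega_A$ and $\sigma_1^{+},\sigma_2^{+}\in[0,\alpha_0]$ so as to maximise the functional above.

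Finally I would identify this maximisation as the dual of the cone-linear program defining incompatibility: the constraints $0\le\sigma_j^{+}\le\alpha_0$ with $\alpha_0\in\Omega_A$ are exactly the positivity and order-unit data of Eqs.~\eqref{Fuzz}--\eqref{Wolf}, and I expect strong duality to return the value $2/\lambda_{e,f}=2+4t_{e,f}$, i.e.\ $2/\lambda_{\rm opt}$ for the optimal pair. The optimal certificate, which exists by \cite{Beneduci13}, then supplies the saturating $\sigma_1^{+},\sigma_2^{+}$ explicitly, and complementary slackness confirms these are genuine subnormalised states. As a consistency check, for the squit this recipe makes $\alpha_0$ the centre of the square and gives $\mathbb{B}=4$, the algebraic maximum, matching $\lambda_{\rm opt}=\tfrac12$. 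The hard part is precisely this last matching: verifying strong duality for the two cone programs and checking that the dual-optimal certificate meets the normalisation and positivity required to count as a steering decomposition, so that the steering hypothesis can in fact deliver it.
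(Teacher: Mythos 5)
Your proposal is correct and takes essentially the same route as the paper: the paper likewise reduces saturation to the dual program \eqref{dual} of \eqref{Wolf} (cited from \cite{Convex}), takes dual-optimal $\mu_1,\dots,\mu_4$ --- whose constraints $\mu_i\ge0$, $(\mu_1+\mu_2)(u_A)=1$, $\mu_1+\mu_2=\mu_3+\mu_4$ are exactly the positivity/normalisation conditions you flag as the ``hard part'', so they hold automatically --- uses uniform universal steering to realise the two decompositions $\{\mu_1,\mu_2\}$ and $\{\mu_3,\mu_4\}$ of $\rho=\mu_1+\mu_2$ by effects $\tilde e,\tilde f$ on $B$, and evaluates $\mathbb{B}=2(2t_{e,f}+1)=2/\lambda_{e,f}$. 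The only difference is direction of presentation: the paper starts from the explicit dual and constructs the four observables, whereas you rewrite the steered Bell maximisation until it becomes that dual, so your remaining verification is dispatched in the paper simply by writing down \eqref{dual}.
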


\begin{proof}
Suppose we have a model of a system $A$ that supports uniform universal steering, and that we have two effects 
$e,f \in \mathscr{E}(\Omega_A)$. The parameter introduced earlier, $t_{e,f}$ can now also be calculated from the 
dual program to \eqref{Wolf}, which can be given as \cite{Convex}
\begin{align}\label{dual}
\textrm{maximise:\quad}  \mu_3(e+f-u_A) &- \mu_1(e) - \mu_2(f) \nonumber\\
\textrm{subject to:\quad\ \ }  (\mu_1+\mu_2)(u_A) &= 1 \nonumber\\
 \mu_1+\mu_2 &= \mu_3+\mu_4 \\
0&\le \mu_1, \mu_2, \mu_3, \mu_4\nonumber
\end{align}
with the $\mu_i \in A(\Omega_A)^*$.

Writing $\mu_1+\mu_2 = \rho$, for the $\mu_i$ that achieve the optimal value for \eqref{dual}, we find that 
$\rho \ge 0$ and $u_A(\rho)=1$, so $\rho \in \Omega_A$. By the assumption of uniform universal steering 
therefore we can find a state $\omega \in \Omega_A\otimes\Omega_B$ with $\omega^A =\ohat(u_B)= \rho$; 
moreover, in $\{\mu_1,\mu_2\}$ and $\{\mu_3,\mu_4\}$ we have two different decompositions of $\rho$, 
and we can thus find effects $\tilde{e},\tilde{f} \in \mathscr{E}(\Omega_B)$ satisfying
\begin{equation}
\ohat(\tilde{e}) = \mu_1, \qquad \ohat(\tilde{f}) = \mu_3.
\end{equation}

To achieve the maximum CHSH violations we take $\sfa_1, \sfa_2, \sfb_1$ and $\sfb_2$ to be $\pm1$-valued 
observables defined by effects $f',e,\tilde{e}'$ and $\tilde{f}'$ respectively; we then have
\begin{equation}\label{eqn:Bell-ops}
\begin{array}{cc}
A_1 = u_A - 2f,\ & B_1 = u_B - 2 \tilde{e}, \\
A_2 = 2e - u_A,\ & B_2 = u _B- 2\tilde{f}.
\end{array}
\end{equation}
The value of the Bell functional can now be evaluated:
\begin{align*}
\mathbb{B} & =  \omega(u_A-2f,2u_B-2\tilde{e}-2\tilde{f}) + \omega(2e-u_A,2\tilde{f}-2\tilde{e}) \\
& = 2\ohat(u_B)(u_A-2f) \\ &\qquad\qquad + 4\ohat(\tilde{e})(f-e) + 4\ohat(\tilde{f})(f+e-u_A) \\
& = 2+4[(\mu_1+\mu_2)(-f) \\ &\qquad\qquad + \mu_1(f) - \mu_1(e)  + \mu_3(f+e-u_A)] \\
& =  2+4[\mu_3(e+f-u_A) - \mu_1(e) - \mu_2(f)] \\
& =  2(2t_{e,f}+1)  =  \frac2{\lambda_{e,f}},
\end{align*}
thus saturating the generalised Tsirelson bound as claimed.
\end{proof}

Not every probabilistic model may possess the property of supporting uniform universal steering, and although it is a sufficient 
condition to obtain the conclusion of the above theorem, as the following example will show, it is not a necessary 
one. Indeed a model of `boxworld', which contains Popescu-Rohrlich (PR) box states exhibiting the maximum possible CHSH 
violations, uses local state spaces that are the squits introduced earlier, and composition is given by the 
maximal tensor product. Despite the saturation of the generalised Tsirelson bound, such a state space does not admit 
uniform universal steering.

To see this, we consider a bipartite state $\omega \in \square\otimes_{max}\square$ with the corresponding map 
$\ohat$.  Note 
that from the definition of $\omega$ being a state, $\hat{\omega}$ will automatically be a positive map sending 
$V^*_+$ into $V_+$.
Now suppose $\omega$ is steering for its marginal $\rho$, i.e. $\hat{\omega}(u) = \rho$, and choose a 
decomposition of $\rho$ into pure states: $\rho = \sum_i \alpha_i$. Since the subnormalised states in the 
decomposition are pure, and $\hat{\omega}$ is positive, the inverse images $\hat{\omega}^{-1}(\alpha_i)$ 
must lie on extremal rays of the cone $V^*_+$. Consider the extremal ray effect $e=(1,1,1)$ with its complement
$e'=(-1,-1,1)$ (which is again extremal).
With appropriate labelling of the $\alpha_i$ we can then 
write $\alpha_1 = \hat{\omega}(e)$ and $\alpha_2 = \hat{\omega}(e')$; however since we have 
$e+e'=u$,
\begin{equation*}
\alpha_1 +\alpha_2 = \hat{\omega}(e+e') = \hat{\omega}(u) = \rho,
\end{equation*}
and hence $\rho$ can be written as a mixture of just two pure states. Since there are many points in a 
square that can only be written as a convex combination of a minimum of three extreme points, we 
conclude that such a model of `boxworld' does not support universal uniform steering.

\begin{remark}\rm 
 It is interesting to note that there is another set of conditions sufficient to obtain the conclusion of the 
 above theorem.   We say that a positive cone $V_+$ is \emph{homogeneous} if the space of order 
 automorphisms of $V$ acts transitively on the interior of $V_+$, and (weakly) \emph{self dual} if there 
 exists a linear map $\eta : V \rightarrow V^*$ that is an isomorphism of ordered linear spaces i.e. $\eta(V_+) = V_+^*$.
 It is known that homogeneity follows from uniform universal steering. Conversely,
if the positive cone $V_+$ generated by the state space $\Omega$ of the probabilistic model of a system 
$A$ is homogeneous and weakly 
self-dual, then uniform universal self-steering follows if the maximal tensor product is adopted. 
Hence the conditions of Theorem 1 are fulfilled \cite{Barnum09a} and the Tsirelson bound in the  inequality 
$\mathbb{B} \le 2/\lambda_{\rm opt}$  can be saturated.

In the quantum probabilistic model, the tensor product is not maximal but still uniform universal steering holds. The classical model (trivially)
satisfies the conditions of weak self-duality and homogeneity, and the tensor product is maximal. The squit is weakly self-dual but does not
satisfy uniform universal steering, so that homogeneity fails; but it allows enough self-steering so that the maximal Bell-Tsirelson 
bound of 4 can be realised.

\end{remark}



\section{Generalised Tsirelson bounds for polygon state spaces}

Work in \cite{Janotta} suggests that there is a spectrum of values for the generalised Tsirelson bound in the case of
2-dimensional polygon state spaces (given as the convex hulls of regular polygons). It is shown there that for a 
system composed of two identical polygon state spaces with an odd number of vertices, the maximally entangled
state does not lead to a violation of the standard Tsirelson bound of $2\sqrt2$, whereas in the case of an even number
of vertices this bound can be exceeded. This suggests that among the class of polygon state spaces, the generalised
Tsirelson bound can be either smaller or greater than the standard Tsirelson bound.

\begin{remark} We note that of the polygon state spaces, the only cases in which homogeneity holds are the $n=3$ triangle, and the $n\to\infty$ circle. Hence in general uniform universal steering is not available, however it may still be possible to saturate the generalised Tsirelson bound in some cases, but in other this may  not be possible.
\end{remark}

As shown in \cite{Janotta}, in the case of `boxworld', where each local state space is a square, the maximally entangled state is a PR box; it takes the maximum possible value for the Bell functional of $4$. This agrees with the result that the squit does indeed lead to the maximum amount of incompatibility, and shows that in this case the generalised Tsirelson bound can be saturated. We have been able to show that this conclusion holds also in regular polygon state spaces where the number of vertices is a multiple of 8. We expect this 
result to extend to all even-sided cases. This strengthens the expectation, expressed in \cite{Janotta}, that the in these cases the Tsirelson bound is saturated with the maximally entangled state.

Moving to the $n=5$ case makes things a lot more interesting however.
To see this we follow the notation in \cite{Janotta} and define the family of state spaces $\Omega_n$ to be the convex hull of the points
$$\omega_i = \left(\begin{array}{c}
r_n \cos(\tfrac{2\pi i}{n}) \\
r_n \sin(\tfrac{2\pi i}{n}) \\
1 \end{array}\right),
\qquad i=1,...,n$$
with $r_n=\sqrt{\sec(\tfrac{\pi}{n})}$.

The qualitative difference between the state spaces of odd and even sided polygons first appears in the structure of the set of effects. For the case of even $n$, along with $0$ and $u$, there are $n$ extremal effects:
$$e_i = \frac{1}{2}\left(\begin{array}{c}
r_n \cos(\tfrac{(2i-1)\pi}{n}) \\
r_n \sin(\tfrac{(2i-1)\pi}{n}) \\
1 \end{array}\right),
\qquad i=1,...,n$$
and in this case all the $e_i$ lie on extremal rays of the cone $V_+^*$. This important fact occurs since for each of the $e_i$ we can find another effect $e_j$, also extremal, which is it's compliment, i.e. $e_j = e_i' = u-e_i$, namely for $j=i+\tfrac{n}{2} \, \textrm{mod} \, n$.
For the case of odd $n$, a seemingly similar expression arises for the ray extremal effects:
$$e_i = \frac{1}{1+r_n^2}\left(\begin{array}{c}
r_n \cos(\tfrac{2\pi i}{n}) \\
r_n \sin(\tfrac{2\pi i}{n}) \\
1 \end{array}\right),
\qquad i=1,...,n$$
On this occasion however, the compliments of the $e_i$ are given by
$$e_i' = u-e_i = \frac{1}{1+r_n^2}\left(\begin{array}{c}
-r_n \cos(\tfrac{2\pi i}{n}) \\
-r_n \sin(\tfrac{2\pi i}{n}) \\
r_n^2 \end{array}\right),
\qquad i=1,...,n$$
which do not coincide with the $e_i$, and thus there are $2n$ non-trivial extreme points of $\mathscr{E}(\Omega_n)$.

Now we can pose the question of what the value is for $\lambda_{opt}$ when the state space is $\Omega_5$, and whether is it possible to achieve the corresponding Bell value $\mathbb{B} = 2/\lambda_{opt}$.
Since each extreme two valued observable is determined by a ray effect, the largest value of incompatibility will come from one of the possible pairs of the $e_i$. However due to the symmetry of the state space, the affine transformation of rotating by $\pi/5$ serves only to cyclically permute the indices of the $e_i$ modulo $5$. This means that there are only two possible values of $\lambda_{e_i,e_j}$, those for nearest neighbors, and those for next nearest neighbors.
Calculation shows that these values are, for example
$$\lambda_{e_1,e_2} = \frac{3+2\sqrt{5}}{11} \approx 0.67928,$$
$$\lambda_{e_1,e_3} = \frac{8+3\sqrt{5}}{19} \approx 0.77416.$$
hence the value of $\lambda_{opt}$ for the pentagon is $\tfrac{3+2\sqrt{5}}{11}$.
From \eqref{bound} this gives the bound on the Bell functional as $\mathbb{B}\le 4\sqrt{5}-6$, however unlike in the case of the tensor product of two squits, the maximally entangled state between two pentagonal state spaces does not saturate the corresponding bound; instead we get a value of $\mathbb{B} = \tfrac{6}{\sqrt{5}}$, strictly below that coming from the level of incompatibility on one state space. This fact suggests that either the chosen way of evaluating the level of incompatibility in a system used does not capture everything, or that there is some structural obstruction that prevents such a link holding, that does not exist on other cases. Here we present some evidence towards the former.

In order to improve the measure of incompatibility used, we wish to modify the program used in eqn. \eqref{Fuzz}. To do this we relax the method of smearing used, still mixing in multiples of the order unit, corresponding to trivial observables; but we now allow them to  be possibly biased as follows:
\begin{equation} e^{(\lambda,p)} = \lambda e + p(1-\lambda) u.\end{equation}
This definition encompasses the old, with $e^{(\lambda)} = e^{(\lambda,\frac12)}$.

The updated measure of incompatibility of a pair of effects $e$ and $f$, which we denote $\bar{\lambda}_{e,f}$, is now given by the optimal value of the optimisation program
\begin{align}\label{Fuzz2}
\textrm{maximise:\quad\quad} & \lambda \nonumber\\
\textrm{subject to:\quad}  g &\le e^{(\lambda,p)} \nonumber\\
 g &\le f^{(\lambda,q)} \\
 0 &\le g \nonumber \\
 e^{(\lambda,p)} + f^{(\lambda,q)} - u &\le g\nonumber \\
0 &\le p,q \le 1. \nonumber
\end{align}

Solving this updated problem in the case of the pentagon again gives the optimal value on e.g. $e_1$ and $e_2$, with
$$\bar{\lambda}_{opt} = \frac{5+\sqrt{5}}{10} \approx 0.72361,$$
which occurs for the values $p=q=1$.

This is indeed a different value from earlier, but still we have that $\frac{2}{\bar{\lambda}_{opt}} \ne \frac{6}{\sqrt{5}}$, however in this case, the unbiased nature of the observables mixed in means such a simple link is no longer expected, and indeed we can see that there is a link to the Bell value on the maximally entangled state as follows.
As in the previous, we can define a smeared version of the Bell functional, where the smearing is all done on the functionals of one party:
\begin{equation}\mathbb{B}^{(\lambda,1)} = \langle A_1^{(\lambda,1)} B_1 + A_1^{(\lambda,1)} B_2 + A_2^{(\lambda,1)} B_1 - A_2^{(\lambda,1)} B_2 \rangle,\end{equation}
but now instead of having the nice linear scaling in $\lambda$, we gain an extra expectation term $\mathbb{B}^{(\lambda,1)} = \lambda\mathbb{B} + 2(1-\lambda)\langle B_1 \rangle$, and again under the assumption that $\lambda$ is small enough to ensure joint measurability, and then taking the largest such value we can write the inequality
\begin{equation}\mathbb{B}\le\frac{2\bigl[1-(1-\bar{\lambda}_{opt})\langle B_1 \rangle\bigr]}{\bar{\lambda}_{opt}}.\label{Bell2}\end{equation}
The link to the maximally entangled state on two pentagons now comes from noting that the expectation of any observable $B_1$ defined by an extreme effect on the maximally entangled state is $\langle B_1 \rangle = \tfrac{5-2\sqrt{5}}{5}$.
This means that if evaluated in the maximally entangled state, the inequality in \eqref{Bell2}, for the value of $\bar{\lambda}_{opt}$ given above, is indeed saturated.

\section{Conclusion}

By combining and developing ideas  from the works of Wolf {\em et al} \cite{Wolf09} and Banik {\em et al} \cite{Banik12},
we have shown that probabilistic models can be classified according to their associated value of the generalised Tsirelson 
bound, which specifies the maximum possible violation of CHSH inequalities. We have given conditions (defined and
studied in  \cite{Barnum09a}), that probabilistic models may or may
not satisfy, under which the maximal CHSH violations are attained for appropriate choices of maximally incompatible
dichotomic observables. Here the degree of the incompatibility of two observables is defined by the minimum amount of
smearing of these observables necessary to turn them into jointly measurable observables.

The authors of  \cite{Wolf09} concluded that observables that are incompatible in quantum mechanics remain incompatible in any probabilistic 
model that serves as an extension of quantum mechanics. Here we have shown that this conclusion applies to extensions of any probabilistic
model that allows for sufficient steering.

As an illustration of the general results we have considered the squit system which underlies the PR box model, 
and have identified the pair of maximally incompatible extremal effects of the squit that give rise to the saturation 
of the largest possible value (i.e., 4) of the Tsirelson bound. In addition, we have obtained partial confirmation of the
conjectured maximality of the Bell functional if evaluated on the maximally entangled state in the class of regular polygon state spaces
considered in \cite{Janotta}.

In the case of the pentagon state space we discovered that the connection between incompatibility and Bell violation is not 
always of the simple form envisaged originally and used through most of this paper; this suggests that the definitive universal
expression of this connection remains yet to be found.

The methods used here are taken from amongst some of the  standard tools of quantum measurement and 
information theory used in
\cite{Wolf09} and \cite{Banik12}, and we have shown that they apply equally well in a wide class of probabilistic models.
This insight may prove valuable in future investigations into the characterisation of quantum mechanics among all
probabilistic models.

\section*{Acknowledgements} We wish to thank Takayuki Miyadera for the suggestion to consider the generalisation of the 
optimisation problem for the incompatibility parameter that allows mixing with trivial observables that are not necessarily unbiased.
We are also grateful to Manik Banik and Alexander Wilce for helpful comments on a draft version of the paper. N.S. gratefully acknowledges support through the award of an Annie Currie Williamson PhD Bursary at the University of York.




\end{document}